%
\documentclass[runningheads]{llncs}
\usepackage[T1]{fontenc}
%
\usepackage{graphicx}
\usepackage{mathtools}
\usepackage{amsfonts}
\usepackage{stmaryrd}

%
%


\usepackage{caption}
\usepackage{subfigure}

\newcommand{\metis}{\mbox{\sc Me$\!$T$\!$iS}}    
\newcommand{\scotch}{\mbox{\sc Scotch}}               
\newcommand{\patoh}{\mbox{\sc PaToH}}

\DeclareMathOperator{\CT}{{\cal T}} 
\DeclareMathOperator{\CS}{{\cal S}} 
\DeclareMathOperator{\CC}{{\cal C}} 
\DeclareMathOperator{\CN}{{\cal N}} 
\DeclareMathOperator{\CU}{{\cal U}} 
\DeclareMathOperator{\co}{\it p}        
\DeclareMathOperator{\we}{\it m}        

\begin{document}
\title{A Bi-Criteria FPTAS for Scheduling with Memory Constraints on Graphs with Bounded Tree-width}
\titlerunning{A Bi-Criteria FPTAS for Scheduling with Memory Constraints}
%
%
\author{Eric Angel\inst{1}  \and S\'ebastien Morais\inst{2,3} \and Damien Regnault\inst{1}}%
\authorrunning{E. Angel et al.}
%
%
\institute{
IBISC, Univ Evry, Universite Paris-Saclay, 91025, Evry, France\\ \email{\{Eric.Angel,Damien.Regnault\}@univ-evry.fr},
\and
CEA, DAM, DIF, F-91297 Arpajon, France\\ \email{Sebastien.Morais@cea.fr}
\and 
LIHPC - Laboratoire en Informatique Haute Performance pour le Calcul et la simulation - DAM \^Ile-de-France, University of Paris-Saclay}
\maketitle              
\begin{abstract}
In this paper we study a scheduling problem arising from executing numerical simulations on HPC architectures. With a constant number of parallel machines, the objective is to minimize the makespan under memory constraints for the machines. Those constraints come from a neighborhood graph $G$ for the jobs.  Motivated by a previous result on graphs $G$ with bounded path-width, our focus is on the case when the neighborhood graph $G$ has bounded tree-width. Our result is a bi-criteria fully polynomial time approximation algorithm based on a dynamic programming algorithm. It allows to find a solution within a factor of $1+\epsilon$ of the optimal makespan, where the memory capacity of the machines may be exceeded by a factor at most $1+\epsilon$. This result relies on the use of a nice tree decomposition of $G$ and its traversal in a specific way which may be useful on its own. The case of unrelated machines is also tractable with minor modifications.
\end{abstract}

\section{Introduction}
In this paper, we study the scheduling problem $Pk|G, mem|C_{max}$ previously introduced in~\cite{Angel2016} where the number of machines is a fixed constant. This problem is motivated by running distributed numerical simulations based on high-ordered finite elements or volume methods~\cite{Ern_Guermond-FEM-2004,LeVeque_2002}. Such approaches require the geometric domain of study to be discretized into basic elements, called cells, which form a mesh. Each cell has a computational cost, and a memory weight depending on the amount of data (i.e. density, pressure, \ldots) stored on that cell. Moreover, performing the computation of a cell requires, in addition to its data, data located in its  neighborhood\footnote{The neighborhood is most of the time topologically defined (cells sharing an edge or a face).}. For a distributed simulation, the problem is to assign all the computations to processing units with bounded memory capacities, while minimizing the makespan. As an illustration of the previous notions, let us consider Figure~\ref{fig_mesh} where a mesh and its associated computations are assigned onto 3 processing units. Each color corresponds to a processing unit and the total amount of memory needed by each processing unit is not limited to the colored cells but extends to some adjacent cells. An exploded view of the mesh is pictured in Figure~\ref{fig_mesh_ghost_cells}, where we consider an edge-based adjacency relationship and where the memory needed by each processing unit is equal to both colored and white cells. In practice, efficient partitioning tools such as \scotch~\cite{pellegrini96}, \metis~\cite{metis98}, Zoltan~\cite{zoltan99} or \patoh~\cite{patoh11} are used. However, the solutions returned by these tools may not respect the memory capacities of the processing units~\cite{MPUMC20}.

\begin{figure}[ht]
\centering
\begin{subfigure}[]{\centering \label{fig_mesh}
\includegraphics[width=.3\linewidth, height=2cm]{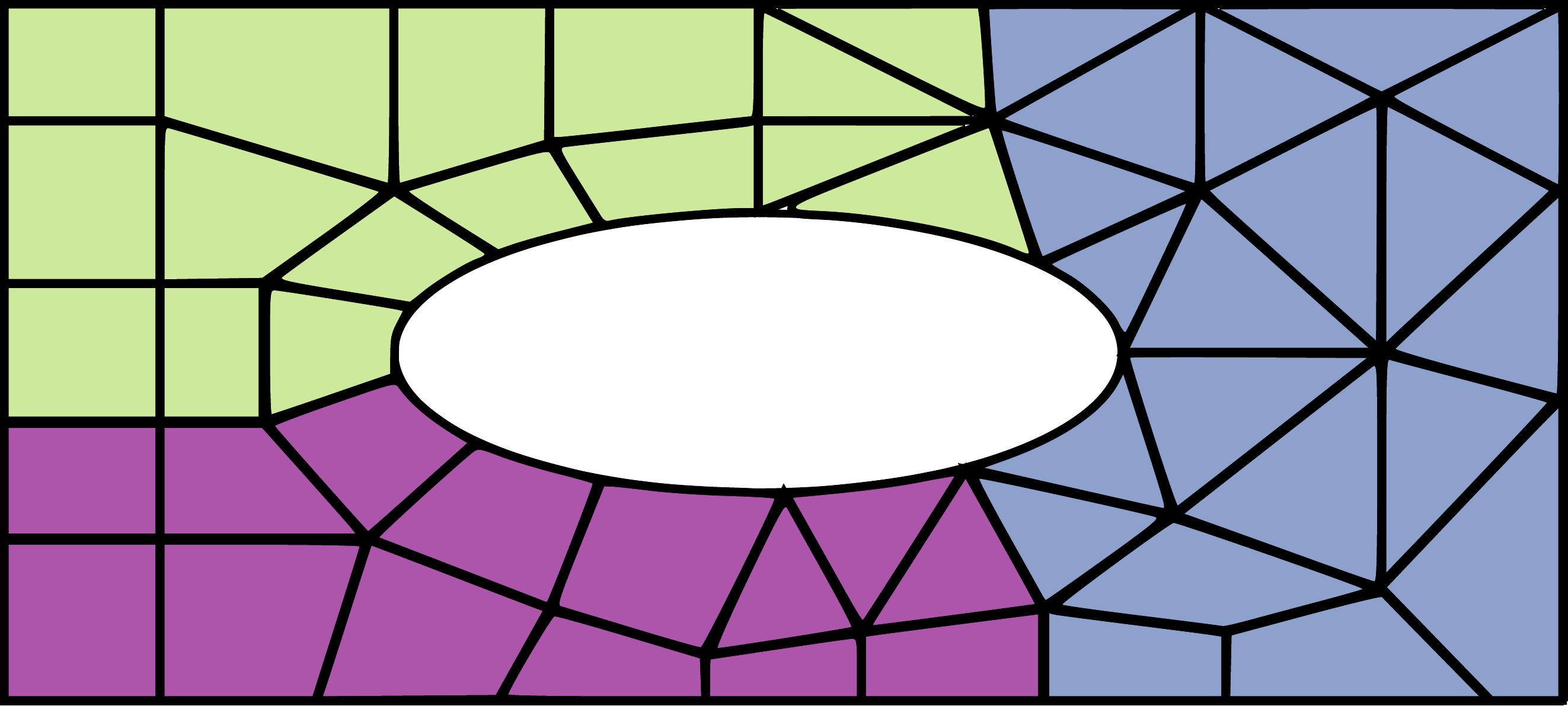}}
\end{subfigure} ~ ~ ~ 
\begin{subfigure}[]{\centering \label{fig_mesh_ghost_cells}
\includegraphics[width=.35\linewidth, height=2cm]{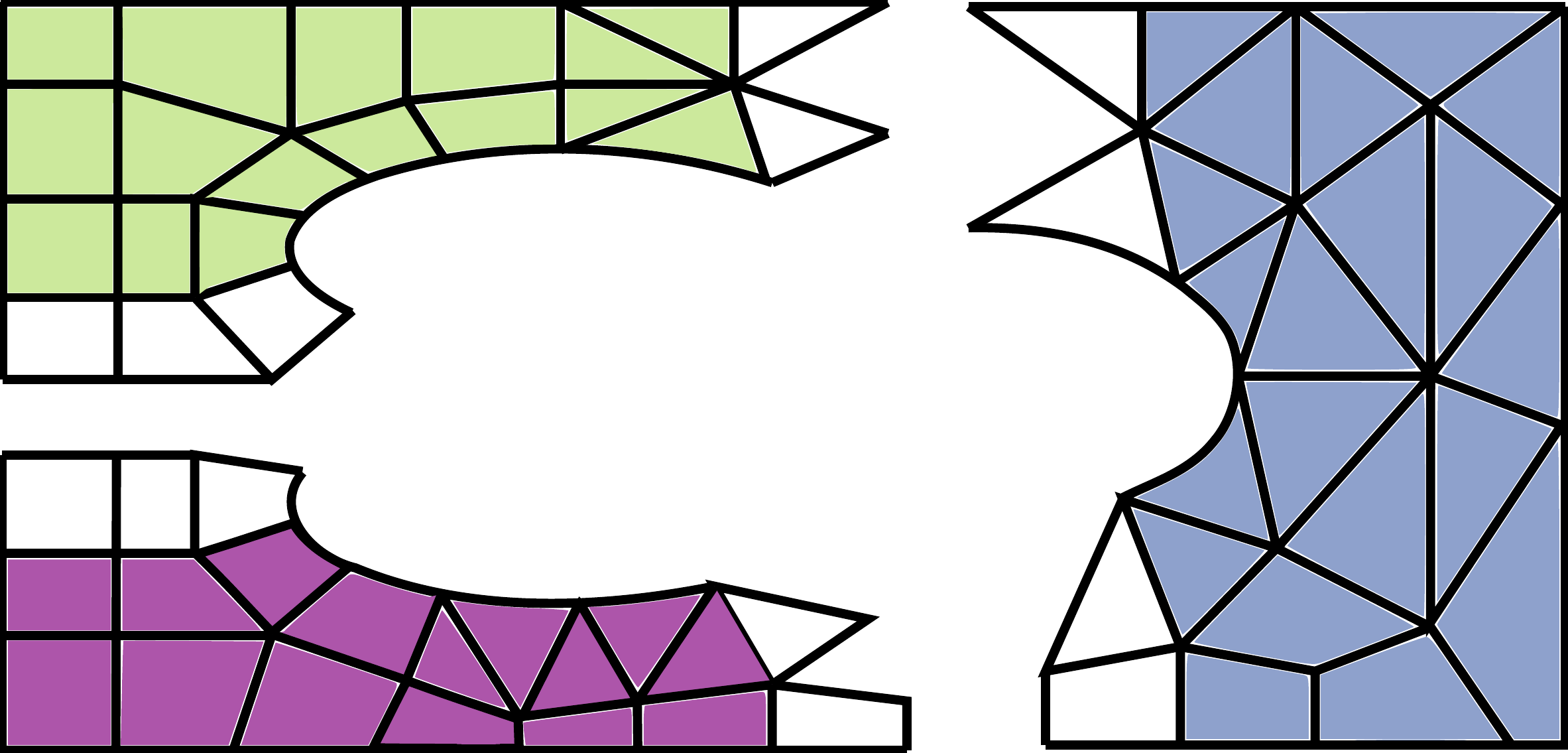}}
\end{subfigure}
\caption{\label{fig:info-intro}In ($a$), a 2D mesh and its computations are assigned onto 3 processing units. In ($b$), an exploded view of the assignment with an edge-based adjacency relationship.}
\end{figure}

Formally, the scheduling problem under memory constraints is defined as follows. We have a set of $n$ jobs $J$, and each job $j\in J$ requires $\co_j\in \mathbb{N}$ {\it units of time} to be executed (computation time) and an amount $\we_j\in \mathbb{N}$ of memory. 
Jobs have to be assigned among a fixed number $k$ of identical {\it machines}, each machine $l$ having a memory capacity $M_l\in \mathbb{N}$, for $l=1,\ldots,k$.  
Additionally we have an undirected graph $G(J,E)$, which we refer to as the {\it neighborhood graph}. Two jobs $j \in J$ and $j' \in J$ are said to be adjacent
if there is an edge $(j,j') \in E$ in $G$.
Moreover, each job $j$ requires data from its set of {\it adjacent jobs}, denoted by $\mathcal{N}(j) := \{j'\in J\,|\,(j,j')\in E\}$. 
For a subset of jobs $J'\subseteq J$, we note $\CN(J') := \cup_{j\in J'} \CN(j) \setminus J'$ and denote $\CN[J'] := \CN(J') \cup J'$. When a subset of jobs $J'\subseteq J$ is scheduled on a machine, this machine needs to allocate an amount of memory equal to 
$\sum_{j \in {\cal N}[J']} \we_j$, while its processing time is $\sum_{j\in J'} \co_j$. The objective is to assign each job of $J$ onto exactly a machine, such that the makespan (the maximum processing time over all machines) is minimized and ensuring  that the amount of memory allocated by each machine is smaller than or equal to its memory capacity. 

The scheduling problem under memory constraints embraces other well-known {\bf NP}-hard scheduling problems and cannot be solved in polynomial time unless {\bf P=NP}. Thus, one could be interested in developping approximation algorithms. An {\it $\alpha$-approximation algorithm} (for some $\alpha \geq 1$) for a minimization problem is a polynomial-time algorithm that produces, for any given problem instance $I$, a solution whose value is at most $\alpha$ times the optimum value. In particular, a {\it fully polynomial-time approximation scheme} (FPTAS) is a family of (1+$\epsilon$)-approximation algorithms for all $\epsilon > 0$ whose time complexity is polynomial in both the input size and $1 / \epsilon$. Considering the scheduling problem under memory constraints, one could wonder if approximation algorithms can be obtained when the memory constraints are relaxed. For $\alpha \geq 1$ and $\beta \geq 1$, an $(\alpha, \beta)$ {\it (bi-criteria) approximation algorithm} returns a schedule with objective value at most $\alpha C$ and with memory load at most $\beta M$, where $C$ and $M$, respectively, are the maximum computation time and the memory load of an optimal schedule with respect to the makespan. Following the terminology of~\cite{biFPTAS}, a bi-FPTAS 
for the scheduling problem under memory constraints is a FPTAS which is a bicriteria $(1+\epsilon, 1+\epsilon)$ {\it approximation algorithm}.

\subsection{Related problems} \label{subsec-relat-prob}
The problem $Rk|G,mem|C_{max}$ contains other well-known {\bf NP}-hard scheduling problems. When $\we_j=0$ for each job $j$, the problem $Rk|G,mem|C_{max}$ becomes the scheduling problem $Rk||C_{max}$ for which several approximations algorithms exist~\cite{Gairing:2007:FCA:1244475.1244730,Lenstra:1990:AAS:81018.81019,woeginger2000does}. When the neighborhood graph has no edges, and the memory is bounded on each machine, and $\we_j=1$ for each job $j$, we get the so-called Scheduling Machines with Capacity Constraints problem (SMCC). In this problem, each machine can process at most a fixed number of jobs. 
Zhang et al.~\cite{Zhang:2009:ASM:1574693.1574730} gave a $3$-approximation algorithm by using the iterative rounding method.
Saha and Srinivasan~\cite{Saha10anew} gave a $2$-approximation in a more general scheduling setting, i.e. Scheduling Unrelated Machines with Capacity Constraints. 
Lately, Keller and Kotov~\cite{DBLP:journals/orl/KellererK11} gave a $1.5$-approximation algorithm.
 Chen et al. established an EPTAS~\cite{DBLP:conf/cocoa/0011JLZ16} for this problem and, 
for the special case of two machines, Woeginger designed a FPTAS~\cite{Woeginger:2005:CST:2307218.2307260}.

\subsection{Main Contribution}
As the scheduling problem under memory constraints is a generalization of those well-known scheduling problems, a reasonable question is to know whether we can get approximation algorithms, which could depend on some parameters of the neighborhood graph, when the number of machines is a fixed constant. We answered this question in a previous paper~\cite{Angel2016} by providing a {\it fixed-parameter tractable} (FPT) algorithm with respect to the path-width of the neighborhood graph, which returns a solution within a ratio of $(1+\varepsilon)$ for both the optimum makespan and the memory capacity constraints (assuming that there exists at least one feasible solution). In this paper we extend this result by providing a bi-FPTAS for graphs with tree-width bounded by a constant. Unlike the FPT algorithm which relies on the numbering of the vertices of the neighborhood graph, the bi-FPTAS takes advantage of a nice tree decomposition of the neighborhood graph and of its traversal in a particular way to bound the algorithm complexity. 

\subsection{Outline of the Paper}
We start by briefly recalling in Section~\ref{sec-def} the definitions of different notions useful in the sequel. We then provide in Section~\ref{sec-dyna} an algorithm that computes all the solutions to this problem. This algorithm consists of three steps: build a nice tree decomposition of $G(J,E)$; compute a layout $L$ defining a bottom-up traversal of the nice tree decomposition; and use a dynamic programming algorithm traversing the nice tree decomposition following $L$. Since the time complexity of this algorithm is not polynomial in the input size, we apply the Trimming-of-the-State-Space technique \cite{Ibarra:1975:FAA:321906.321909} in Section~\ref{sec-trimming} obtaining a bi-criteria approximation algorithm for graphs with tree-width bounded by a constant. Finally, we give some concluding remarks in  Section~\ref{sec-conclu}.

\section{Definitions} \label{sec-def}
Throughout this paper we consider simple, finite undirected graphs. Let us start by defining the notions of tree decomposition, tree-width and nice tree decomposition. The notions of tree decomposition and tree-width were initially introduced in the framework of graph minor theory~\cite{Robertson198339}. 
For a graph $G(J,E)$, let $J(G):=J$ be its vertices and $E(G):=E$ be its edges. A {\it tree decomposition} for $G$ is a pair $(T,X)$, where $T:=(J(T), E(T))$ is a tree, and 
$X:=(X_u)_{u \in J(T)}$ is a family of subsets of $J$ satisfying the following conditions:
\begin{enumerate}
\item For each $j \in J(G)$ there is at least one $u \in J(T)$ such that $j \in X_u$.
\item For each $\{j,j'\} \in E(G)$ there is at least one $u \in J(T)$ such that $j$ and $j'$ are in $X_u$.
\item For each $j \in J(G)$, the set of vertices $u \in J(T)$ such that $j \in X_u$ induces a subtree of $T$.
\end{enumerate}

To distinguish between vertices of $G$ and $T$, the latter are called {\it nodes}. The {\it width} of a tree decomposition is $\max(|X_u| - 1: u \in J(T))$ and the {\it tree-width} of $G$, noted $tw(G)$, is the minimum width over all tree decompositions of $G$. A graph $G(J,E)$ is illustrated on Figure~\ref{subfig_neighb_graph} and a tree decomposition of this graph is illustrated on Figure~\ref{subfig_tree_decomposition}. 
\begin{figure}[ht]
\centering
\begin{subfigure}[A graph $G(J,E)$.
]{
\centering
\includegraphics[width=.25\linewidth, height=2.5cm]{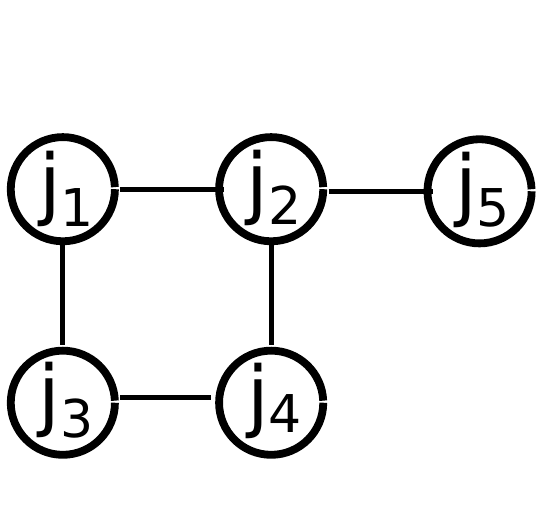}
\label{subfig_neighb_graph}
}
\end{subfigure}
\begin{subfigure}[A tree decomposition $(T,X)$ of the graph $G(J,E)$.]{
\centering
\includegraphics[width=.4\linewidth, height=2.7cm]{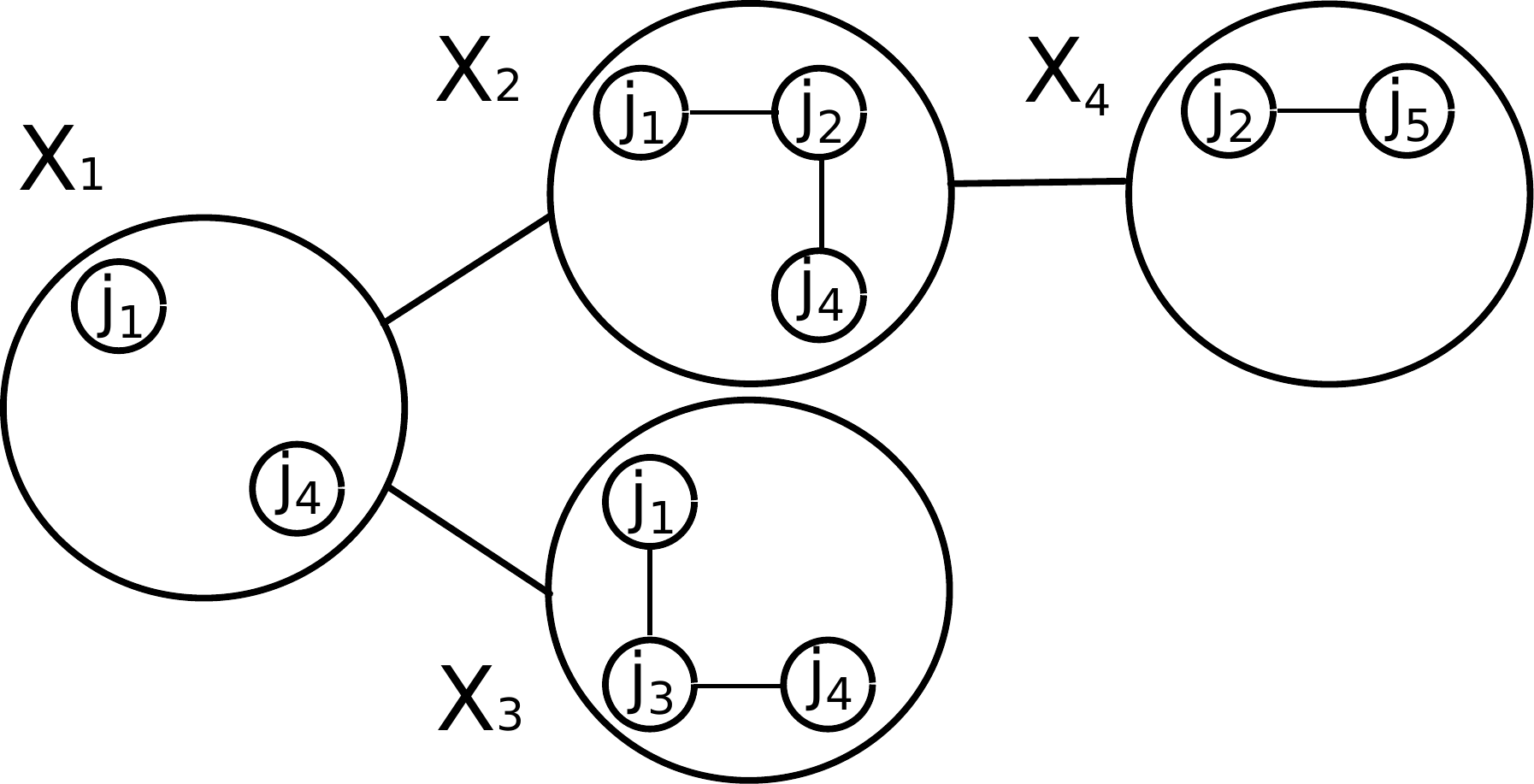}
\label{subfig_tree_decomposition}
}
\end{subfigure}
\caption{Example of a graph $G(J,E)$ in (a) and a tree decomposition $(T,X)$ of this graph where $X$ is composed of the 
sets 
$X_1 = \{j_1, j_4\}$, $X_2 = \{j_1, j_2, j_4\}$, $X_3 = \{j_1, j_3, j_4\}$, $X_4 = \{j_2, j_5\}$ in (b).
}
\label{fig_tree_decomposition}
\end{figure}


Choosing an arbitrary node $r \in J(T)$ as root, we can make a {\it rooted tree decomposition} out of $(T, X)$ with natural parent-child and ancestor-descendant relations. A node without children is called a {\it leaf}.

A rooted tree decomposition $(T,X)$ with root $r$ is called {\it nice} if every node $u \in J(T)$ is of one of the following types:
\begin{itemize}
\item \textbf{Leaf:} node $u$ is a leaf of $T$ and $|X_u|=1$.
\item \textbf{Introduce:} node $u$ has only one child $c$ and there is a vertex $j \in J(G)$ such that $X_u = X_c \cup \{j\}$.
\item \textbf{Forget:} node $u$ has only one child $c$ and there is a vertex $j \in J(G)$ such that $X_c = X_u \cup \{j\}$.
\item \textbf{Join:} node $u$ has only two children $l$ and $r$ such that $X_u = X_l = X_r$.
\end{itemize}

In Figure~\ref{fig_nice_tree_decomposition} we present a nice tree decomposition of $G(J,E)$ illustrated on Figure~\ref{subfig_neighb_graph}
\begin{figure}[ht!]
\centering
\includegraphics[width=\linewidth, height=3.9cm]{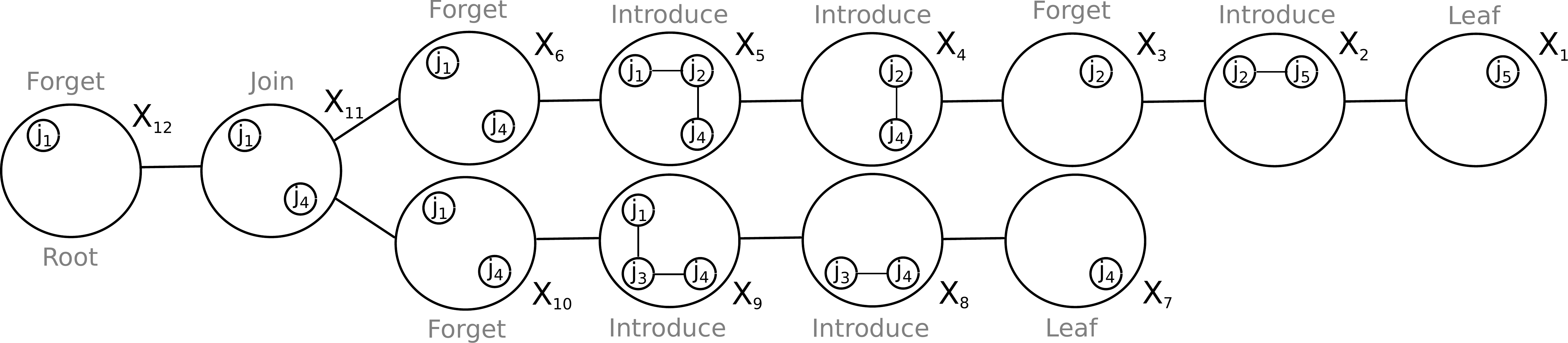}
\caption{Example of a nice tree decomposition of the graph $G(J,E)$ with width $tw(G)=2$ where the node types are written in grey.}
\label{fig_nice_tree_decomposition}
\end{figure}

Note that a vertex of $J(G)$ can be forgotten at most once in a node of $J(T)$. Otherwise, it would conflict with the third condition listed in the definition of a tree decomposition. We leverage this property later in the article.

There is an alternative definition to the nice tree decomposition where the root $r$ and all leaves $u$ of $T$ are such that $X_r$ = $X_u$ = $\emptyset$. But one can switch from one of these decompositions to the other in a trivial way.

When $G$ is a graph with $tw(G)=h$, where $h$ is any fixed constant, we can compute a tree decomposition of $G$ in linear time with tree-width at most $h$~\cite{Bodlaender96alinear}.  Given a tree decomposition $(T,X)$ of $G(J,E)$ of constant width $h \geq 1$, there is an algorithm that converts it into a nice tree decomposition $(T', X')$ with the same width $h$ and with at most $4n$ nodes, where $n=|J(G)|$, in $O(n)$ times~(Lemma 13.1.3 in \cite{Kloks1994TreewidthCA}). In the rest of the article, we will consider a nice tree decomposition obtained in this way.

%
%

Now, let us introduce the notion of {\it layout} of a nice tree decomposition $(T, X)$, which is simply a one-to-one mapping $L: J(T) \rightarrow \{1, \ldots, |J(T)|\}$. We say that a layout $L$ defines a {\it bottom-up traversal} of a nice tree decomposition $(T, X)$ if for any edge $\{u,v\}\in E(T)$ such that $v$ is a child of $u$ one has $L(v)<L(u)$. In that case, we say that $L$ is a bottom-up layout. 

%

\section{An Exact Algorithm Using Dynamic Programming}\label{sec-dyna}
Briefly, our algorithm consists of three steps. First, we build a nice tree decomposition $(T, X)$ of the graph $G(J,E)$ with bounded tree-width. Such a tree decomposition can be obtained in polynomial time for graph $G$ with tree-width bounded by a constant (see Section~\ref{sec-def}). Then, we compute a specific layout $L$ defining a bottom-up traversal of the nice tree decomposition. Finally, a dynamic programming algorithm passes through the nodes following the previously defined order $L$ and computes a set $\CS_{L(u)}$ of states, which encodes partial solutions for $G_i=(J_i,E_i)$ a subgraph of $G=(J,E)$, for each node $u \in J(T)$. In Section~\ref{sec-31}, we start by presenting the dynamic programming algorithm where we detail how the set of states $\CS_{L(u)}$ is computed depending on the type of node $u$. Then, in Section~\ref{sec-32}, we give a proof of correctness of our dynamic programming algorithm when the nodes of the nice tree decomposition are traversed in a bottom-up way.
Eventually, we compute the complexity of our dynamic programming algorithm when the decomposition is traversed following the layout $L$. This layout is used to bound the complexity of our algorithm and, being bottom-up, it is compliant with the pre-requisite on proof of completeness.



\subsection{The Dynamic Programming Algorithm \label{sec-31}}
The presentation of the dynamic algorithm is done for two machines, but it can be generalized to a constant number $k$ of machines, with $k > 2$. The dynamic algorithm goes through $|J(T)|$ phases. Each phase $i$, with $i = 1, \ldots, |J(T)|$, processes the node $L^{-1}(i) \in J(T)$ and produces a set $\CS_i$ of states. 
In the sequel, for sake of readability, we use the notation $Z_i := X_{L^{-1}(i)}$.
Each state in the state space $\CS_i$ encodes a solution for the graph $G_i=(J_i, E_i)$, where $J_i := \cup_{o=1}^{i} Z_o$ with $J_0 = \emptyset$, and $E_i := E_{i-1} \cup E_{Z_i}$ with $E_{0}=\emptyset$ and $E_{Z_i}$ the set of all edges in $E$ which have both endpoints in $Z_i$.

For each phase $i$, we denote by $J_L(i)$ the set of vertices of $J(G)$ which have not been forgotten when going through nodes $L^{-1}(1)$ to $L^{-1}(i)$. For convenience, we note $J_L(0):=\emptyset$. 
Formally, $J_L(i) := J_i \setminus V_R(i)$,
where $V_R(i)$ is the set of vertices that where removed in a Forget node  $o$ such that $L(o) \leq i$.

A state $s \in \CS_i$ is a vector $[c_1, c_2, c_3, c_4, \CC_i]$ where:
\begin{itemize}
\item $c_1$ (resp. $c_2$) is the total processing time on the first (resp. second) machine in the constructed schedule,
\item $c_3$ (resp. $c_4$) is the total amount of memory required by the first (resp. second) machine in the constructed schedule,
\item $\CC_i$ is an additional structure, called {\it combinatorial frontier}. For a given solution of $G_i(J_i, E_i)$, it is defined as $\CC_i:=(J_L(i), \sigma_i, \sigma_i')$ where $\sigma_i : J_L(i) \rightarrow \{1, 2\}$ and $\sigma_i' : J_L(i)\rightarrow \{0,1\}$ such that $\sigma_i(j)$ is the machine on which $j\in J_L(i)$ has been assigned, and $\sigma_i'(j):=1$ if the machine on which $j$ is not assigned, i.e. machine $3 - \sigma_i(j)$, has already memorised the data of $j$. Notice that $J_L(i)\subseteq J_i$ and keeping into memory the combinatorial frontier with respect to $J_L(i)$ rather than $J_i$ is a key point in our algorithm in order to bound its complexity.
\end{itemize}

In the following, we present how to compute $\CS_{i}$ from $\CS_{i-1}$ depending on the type of node $L^{-1}(i)$. For that, we present how states of $\CS_i$ are obtained from an arbitrary state $s = [c_1, c_2, c_3, c_4, \CC_{i-1}] \in \CS_{i-1}$. When $L^{-1}(i)$ is a Leaf node with $Z_i=\{j\}$ or an Introduce node with $j$ the vertex introduced, we note $s_a$ ($a=1,2$) the state of $\CS_i$ obtained from $s$ and resulting from the assignment of $j$ to machine $a$, and $\CC_i^a$ the combinatorial frontier obtained from $\CC_{i-1}$ when $j$ is assigned to machine~$a$.

\smallskip
{\bf Leaf} Let $L^{-1}(i) \in J(T)$ be a Leaf of $T$ with $Z_i = \{j\}$. For each state of $\CS_{i-1}$ we add at most two states in $\CS_i$. If $j \in J_L(i-1)$, it means that $j$ has already been assigned to a machine. Therefore, there is nothing to do and $\CS_i = \CS_{i-1}$. Now, let us assume that $j \notin J_L(i-1)$.
In this case, we must compute two new states taking into account the assignment of $j$ to machine one or two. 
We have 
$$ s_a = [c_1 + \delta_{a,1} c_j,~ c_2 + \delta_{a,2} c_j,~ c_3 + \delta_{a,1} \we_j,~ c_4 + \delta_{a,2} \we_j,~ \CC_i^a] $$
where $\delta$ is the Kronecker function ($\delta_{i,j}=1$ if $i=j$, and $\delta_{i,j}=0$ otherwise). Since $j \notin  J_L(i-1)$, the new combinatorial frontier is obtained by extending $\CC_{i-1}$ in adding new information related to $j$, i.e. $\sigma_i(j)=a$ and $\sigma_i'(j)=0$. 
Note that we have $\sigma_{i}'(j) = 0$ because $j$ was not assigned before phase $i$ and $E_{Z_i} = \emptyset$.


\smallskip
{\bf Introduce} Let $L^{-1}(i) \in J(T)$ be an Introduce node of $T$ and $j \in J(G)$ being the vertex introduced. Again, for each state $s$ of $\CS_{i-1}$ we are going to add at most two states to $\CS_i$ depending on $j$ assignment. However, processing an Introduce node differs from a Leaf because we may have to consider new edges. This happens when $E_i \backslash E_{i-1} \neq \emptyset$. There are two cases to consider. The first one is when $j \in J_L(i-1)$. In that case, job $j$ has already been assigned on machine $a=\sigma_{i-1}(j)$. We add a state in $\CS_i$ for every state $s$ in $\CS_{i-1}$. Let $F_a$ and $F_a'$ be the set of edges such that
\begin{equation}
\label{eq_f}
F_a = \{ \{j,j'\} \in E_{Z_i}: a \neq \sigma_{i-1}(j') \text{ and } \sigma_{i-1}'(j')=0\},
\end{equation}
\begin{equation}
\label{eq_f_prime}
F_a' = \{ \{j,j'\} \in E_{Z_i}: a \neq \sigma_{i-1}(j') \text{ and } \sigma_{i-1}'(j)=0\}.
\end{equation}
The set $F_a$ represents the new edges in $E_{Z_i}$ inducing additional amount of data on machine $a$. The set $F_a'$ represents the new edges in $E_{Z_i}$ inducing that $\we_j$ must be added on the machine not processing $j$. Note that some edges in $E_{Z_i}$ may have already been considered in a previous node and that they can't be a part of $F_a$ or $F_a'$. Thus, we have
$$ s_a = [c_1,~ c_2,~ c_3 + \delta_{a,1} \alpha_i^1 + \delta_{a,2} \beta_i^1,~ c_4 + \delta_{a,2} \alpha_i^2 + \delta_{a,1} \beta_i^2,~ \CC_i^a] $$ where 
$\alpha_i^a = \sum_{\{j,j'\} \in F_a} \we_{j'}$ and $\beta_i^a = \we_j I\llbracket F_a' \neq \emptyset \rrbracket$
 where $I\llbracket A \rrbracket$ is the indicator function which returns one if condition $A$ is satisfied and zero otherwise. 
Finally, the combinatorial frontier of the new state $s_a$ is obtained from that of $s$ by updating, if necessary, the information of $j$ and vertices $j'$ such that $\{j, j'\} \in F_a$. If we have $F_a' \neq \emptyset$, it means that $j$ was not memorised by machine $3 - a$ in state $s$. However, this is no longer the case for $s_a$ as new edges have been taken into account leading us to $\sigma_i'(j) = 1 \neq \sigma_{i-1}'(j)$. If we have $F_a \neq \emptyset$, then some vertices processed by machine $3-a$ were not memorised by machine $a$ in state $s$. Again, this is no longer the case in $s_a$ following the inclusion of new edges leading us to $\sigma_i'(j') = 1 \neq \sigma_{i-1}'(j')$ for every vertex $j'$ such that $\{j, j'\} \in F_a$.

Now, if $j \notin J_L(i-1)$ then we add two states in $\CS_i$ for every state $s \in \CS_{i-1}$. For $a=1,2$, we have
\begin{equation*}
\begin{aligned}
s_a = [ {} & c_1 + \delta_{a,1} \co_j,~ c_2+ \delta_{a,2} \co_j, \\
{}& c_3 + \delta_{a,1} ( \we_j + \alpha_i^1) + \delta_{a,2} \beta_i^1,~ c_4 + \delta_{a,2} (\we_j + \alpha_i^2) + \delta_{a,1} \beta_i^2, ~ \CC_i^a].
\end{aligned}
\end{equation*}

The way to obtain the first four coordinates of each new state in $\CS_i$ is similar to the case where $j \in J_L(i-1)$ except that we have to add $\co_j$ and $\we_j$ on the machine processing $j$. In the case of the combinatorial frontier, updates defined for $j \in J_L(i-1)$ also apply and we have to add information related to $j$ since it was unknown so far. The added data is $\sigma_i(j) = a$ and $\sigma_i'(j) =  I\llbracket \exists j' \in Z_i : \{j, j'\} \in E_{Z_i} \text{ and }\sigma_{i-1}(j') \neq a \rrbracket$.

\smallskip
{\bf Forget} Let $L^{-1}(i) \in J(T)$ be a Forget node of $T$ and $j \in J(G)$ being the vertex forgotten. This type of node is easier to handle than previous ones since we don't have to deal with new vertex or edges. The only thing to do is to withdraw $j$ from the combination frontier. Thus, for each state $s \in \CS_{i-1}$ we add a state $s' \in \CS_i$ where the combinatorial frontier of $s'$ is equal to that of $s$ from which information on $j$ was removed.


\smallskip {\bf Join}
Let $L^{-1}(i) \in J(T)$ be a Join node of $T$. This type of node is even simpler to deal with than the previous one. Once again, there are no new vertex or edges to handle. Moreover, we don't forget any vertex. For each state $s \in \CS_{i-1}$ we add $s$ to $\CS_i$. Thus, we have $\CS_{i} = \CS_{i-1}$.
\smallskip

Our algorithm ends up by returning the state $s=[c_1,c_2,c_3,c_4,\CC_{|J(T)|}] \in \CS_{|J(T)|}$ with $c_3 \leq M_1$, $c_4 \leq M_2$ and such that $\max\{c_1,c_2\}$ is minimum.

\subsection{Algorithm Correctness \label{sec-32}}
Now, let us present the proof of correctness of our dynamic programming algorithm when the nodes of the nice tree decomposition are traversed in bottom-up. We will prove our algorithm correctness by maintaining the following invariant: the states in $\CS_i$ encode all the solutions for the graph $G_i =(J_i, E_i)$, defined at Section~\ref{sec-31}.

\subsubsection*{Initialization} Let us start with the first node encountered. Let $G_0=(J_0,E_0)$ be an empty graph and $\CS_0$ be the set composed of the single state $[0,0,0,0,\CC_0]$ where $\CC_0$ does not store information. The nodes being traversed in bottom-up, the first node encountered is a Leaf. Let $j \in J(G)$ be the vertex such that $Z_1 = \{j\}$. Since $j \notin J_L(0)$ we have $\CS_1 = [ (\co_j, 0, \we_j, 0, \CC_1^1), (0, \co_j, 0, \we_j, \CC_1^2)]$ where, for $a=1,2$, $\CC_1^a$ is such that $\sigma_1(j)=a$ and $\sigma_1'(j)=0$. These two states encode the assignment of $j$ on machines one and two when considering the graph $G_1=(J_1, E_1)$. Moreover, the combinatorial frontier obtained allows us to keep in memory potentially necessary knowledge for graphs of which $G_1=(J_1,E_1)$ is a sub-graph. Thus the invariant is correct for the first node.

\subsubsection*{Maintenance} Now let us assume that the invariant holds for $L^{-1}(i-1) \in J(T)$ and let us prove that it is still correct for $L^{-1}(i) \in J(T)$.

{\bf Leaf} Let $L^{-1}(i) \in J(T)$ being a Leaf with $Z_i = \{j\}$. If $j \in J_L(i-1)$ then our algorithm states that $\CS_i = \CS_{i-1}$. In that case, the invariant holds because $G_{i}=(J_{i},E_{i})$ is equal to $G_{i-1}=(J_{i-1},E_{i-1})$. Now, if $j \notin J_L(i)$ then our algorithm adds two new states in $\CS_i$ for every state in $s \in \CS_{i-1}$ to take into account the assignment of $j$ to machine one and two. Each new state is obtained by adding $\co_j$ and $\we_j$ according to the assignment of $j$ and the associated combinatorial frontier is obtained by extending the combinatorial frontier of $s$ with information on $j$ assignment, i.e. $\sigma_i(j)=a$ and $\sigma_i'(j)=0$. Since we are dealing with a Leaf and $j \notin J_L(i)$ we have $G_{i}=(V_{i-1} \cup \{j\},E_{i-1})$. Therefore, the invariant holds.
\smallskip

{\bf Introduce} Let $L^{-1}(i) \in J(T)$ being an Introduce node with $j \in J(G)$ being the vertex introduced. If $j \in J_L(i-1)$ then our algorithm adds one new state in $\CS_i$ for every state in $\CS_{i-1}$. A new state in $\CS_{i}$ is obtained from a state in $\CS_{i-1}$ by adding, if needed, some amount of data on machine one and two.
Let $a = \sigma_{i-1}(j)$ and $F_a$ and $F_a'$ be the sets defined in~(\ref{eq_f}) and (\ref{eq_f_prime}). We note $F_a''$ the set such that $F_a'' = E_{Z_i} \backslash (F_a \cup F_a')$.


\begin{lemma}
Let $s$ be a state encoding a solution of a graph $G'=(J',E')$. Then, if we add an edge $e=\{j,j'\}$ such that $j \in J'$, $j' \in J'$ and $e \in F_a''$ then $s$ also encodes a solution of the graph  $G'=(J', E' \cup e)$.
\label{lemma_graph_encoding}
\end{lemma}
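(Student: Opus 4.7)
The plan is to show that adding an edge $e = \{j,j'\}$ with $e \in F_a''$ to the edge set $E'$ of $G'$ does not modify any of the quantities that a state $s = [c_1,c_2,c_3,c_4,\CC_{i-1}]$ is required to track. Since the only data stored in $s$ are the two machine loads, the two memory usages, and the combinatorial frontier, it suffices to verify, component by component, that each of them remains a correct encoding of some solution for the augmented graph $(J', E' \cup \{e\})$. The processing times $c_1,c_2$ are trivially unaffected, because adding an edge does not add jobs; the entire argument therefore reduces to the memory coordinates $c_3,c_4$ and to the consistency of the combinatorial frontier $\CC_{i-1}$.

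The key step is a case analysis on what it means for $e=\{j,j'\}$ to lie in $F_a'' = E_{Z_i}\setminus(F_a\cup F_a')$. Unfolding the definitions in~(\ref{eq_f}) and~(\ref{eq_f_prime}), an edge $\{j,j'\}$ belongs to $F_a''$ iff either (i)~$\sigma_{i-1}(j')=a$, i.e.\ both endpoints are assigned to the same machine, or (ii)~$\sigma_{i-1}(j')\neq a$ together with $\sigma_{i-1}'(j')=1$ and $\sigma_{i-1}'(j)=1$, i.e.\ the endpoints sit on opposite machines but each of them has already been memorised on the machine that does not process it. In case~(i) the edge contributes no memory requirement to either machine, because machine $a$ already contains $\we_j$ and $\we_{j'}$ and the other machine is neither processing nor required to memorise either endpoint. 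In case~(ii) the edge would normally force the other machine to memorise both $j$ and $j'$, but these data are already present by the $\sigma_{i-1}'$ flags, so again the memory loads $c_3,c_4$ do not move, and the flags $\sigma_{i-1}'$ themselves are already set to $1$ at both endpoints.

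Once both cases are handled, the conclusion follows immediately: the tuple $[c_1,c_2,c_3,c_4,\CC_{i-1}]$ that was a valid encoding of a solution of $(J',E')$ is still a valid encoding of a solution of $(J',E'\cup\{e\})$, because none of its coordinates needed to change. The main obstacle I anticipate is purely a bookkeeping one: being careful that the two sub-cases above really are exhaustive, and making sure in case~(ii) that the symmetry between $j$ and $j'$ is handled correctly, since the definitions of $F_a$ and $F_a'$ are not symmetric in their roles (one tests $\sigma_{i-1}'(j')$, the other $\sigma_{i-1}'(j)$). Everything else is essentially a direct reading of the update rules given for the Introduce node, which confirms that an edge in $F_a''$ triggers neither an update of $c_3,c_4$ nor a modification of $\CC_{i-1}$.
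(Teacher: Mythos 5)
Your proof is correct and follows essentially the same route as the paper's: both argue that processing times are trivially unaffected and then observe that an edge of $F_a''$ either joins two jobs on the same machine or joins two jobs already memorised by the opposite machines, so neither the memory coordinates nor the combinatorial frontier need to change. Your version merely unfolds the definition of $F_a'' = E_{Z_i}\setminus(F_a\cup F_a')$ more explicitly into the two exhaustive cases, which the paper states without detailing.
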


\begin{proof}
The proof of this lemma is based on the fact that introducing such edge does not make $s$ inconsistent with graph $G'=(J', E' \cup e)$. Let us begin by noting that adding an edge $e=\{j, j'\} \in F_a''$ does not require to modify the processing times in $s$ to make it a state encoding a solution of $G'=(J', E' \cup e)$.
Indeed, since $s$ encodes a solution for $G'=(J',E')$, the processing time induced by the assignment of $j$ and $j'$ has already been encoded. Now, suppose that $e \in F_a''$. Then, we have either $j$ and $j'$ that are assigned to the same machine, or $j$ and $j'$ that are memorised by both machines. In either case, adding such an edge does not require to modify the amount of memory or combinatorial frontier in
$s$ to make it a state encoding a solution of $G'=(J', E' \cup e)$. {\hfill $\square$ \bigbreak}
\end{proof}

Let us now go back to our algorithm. On the machine processing $j$, our algorithm adds $\we_{j'}$ for every vertex $j' \in J_i$ such that $\{j,j'\} \in F_a$. Indeed, since $j'$ is on a different machine than $j$ and that this machine does not memorise $j'$, it is necessary to add $\we_{j'}$ on machine $\sigma_{i-1}(j)$ to take into account the edge $\{j, j'\}$. On the machine not processing $j$, our algorithm adds $\we_{j}$ if there is an edge $\{j, j'\} \in F_a'$. Indeed, as $j'$ is on a different machine than $j'$ and $j$ is not memorised by this machine, it is necessary to add $\we_{j}$ on machine $\sigma_{i-1}(j')$ to take into account the existence of such an edge. Finally, we update the combinatorial frontier information on vertex $j$ if $F_a' \neq \emptyset$ and on vertices $j'$ such that $\{j, j'\} \in F_a$. Therefore, the states returned by our algorithm encode solutions for the graph $G'=(J_i, E_{i-1} \cup F_a \cup F_a' \cup F_a'')$ and the combinatorial frontier is consistent with the addition of new vertices or edges. 
According to Lemma~\ref{lemma_graph_encoding}, our algorithm encodes solutions for the graph $G_i = (J_i, E_{i})$ since $E_i = E_{i-1} \cup E_{Z_i}$ and $E_{Z_i}\!= F_a \cup F_a' \cup F_a''$. Thus, the invariant holds.

Now, if $j \notin J_L(i-1)$ the proof of the invariant enforcement is similar to the case where $j \in J_L(i-1)$. The difference lies in the fact that $j$ is not yet assigned. Thus, one must generate two new states in $\CS_i$ for each state in $\CS_{i-1}$ and the processing time, and amount of memory, of $j$ must be added on the machine processing $j$.
\smallskip

{\bf Forget} Let $L^{-1}(i) \in J(T)$ be a Forget node of $T$ and $j \in J(G)$ being the vertex forgotten. Here, our algorithm generates the states of $\CS_i$ by taking those of $\CS_{i-1}$ from which it removes information on vertex $j$ from the combinatorial frontier. First, let us note that $G_i=(J_i,E_i)$ is equal to $G_{i-1}=(J_{i-1},E_{i-1})$ and the invariant holds. Notice that since we traverse $T$ in bottom-up, we know that removing a vertex $j$ implies that all edges linked to it have been explored. Otherwise, it would lead to the violation of a property of the tree decomposition (the third listed in Section~\ref{sec-def}). Therefore, we can stop memorising the information related to vertex $j$.
\smallskip

{\bf Join} Let $L^{-1}(i) \in J(T)$ be a Join node of $T$. In that case, our algorithm computes $\CS_i$ by retrieving the states of $\CS_{i-1}$ without modifying them. Since we have $G_i=(J_i,E_i)$ equal to $G_{i-1}=(J_{i-1},E_{i-1})$ and no modification on the combinatorial frontier is performed, the invariant holds.

\subsubsection*{Termination} Finally, from the first and second conditions listed in the definition of the tree decomposition, we know that the graph $G_{|J(T)|}=(J_{|J(T)|}, E_{|J(T)|})$ is equal to $G=(J,E)$. Since our invariant is valid for the first node and during the transition from nodes $L^{-1}(i-1)$ to $L^{-1}(i)$, our algorithm returns an optimal solution for the scheduling problem under memory constraints.

\subsection{Algorithm Complexity \label{sec-33}}
Let us now evaluate the time complexity of our dynamic programming algorithm. Let $J_L^{max} := \max_{1 \leq i \leq |J(T)|} |J_L(i)|$.
Let $\co_{sum} := \sum_{j \in J(G)} \co_j$ and $\we_{sum} := \sum_{j \in J(G)} \we_j$, then for each state $s = [c_1, c_2, c_3, c_4, \CC_i] \in \CS_i$, $c_1$ and $c_2$ are integers between $0$ and $\co_{sum}$, $c_3$ and $c_4$ are integers between $0$ and $\we_{sum}$. The number of distinct combinatorial frontiers is $4^{J_L^{max}}$. Therefore, the number of states is $|\CS_i| = O(\co_{sum}^2 \times \we_{sum}^2 \times 4^{J_L^{max}})$. The dynamic programming algorithm processes all $|J(T)| = O(n)$ nodes of the nice tree decomposition. Each state in a phase can give at most two states in the next phase with a processing time of $O(J_L^{max})$ to compute these states. 
Recall also that in the algorithm, if two states $s$ and $s'$ have the same components, including the same combinatorial frontier, then only one of them is kept in the state space. The time complexity to test whether two states $s$ and $s'$ are the same is thus proportional to the length of the combinatorial frontier, and is therefore $O(J_L^{max})$.
We obtain that the overall complexity of the dynamic programming algorithm is 
$O(n \times |\CS_i| \times (J_L^{max} + |\CS_i| J_L^{max}))=
O(n \times J_L^{max} \times (\co_{sum}^2 \times \we_{sum}^2 \times 4^{J_L^{max}})^2)$.

Notice that $J_L^{max}$ depends on the chosen layout $L$, and to minimize this complexity it is therefore important to find a layout $L$ with a small $J_L^{max}$. 

\begin{lemma}
\label{lemma_size_combinatorial_frontier}
There exists a bottum-up layout $L$ of the nice tree decomposition such that 
$J_L^{max} \leq  tw(G)\, \lceil \log 4n\rceil $.
\end{lemma}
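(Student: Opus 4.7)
The plan is to combine a structural bound on the live vertices with a recursive construction of a balanced bottom-up layout. The structural part reduces the question to bounding the number of ``currently maximal'' processed nodes at every phase; the construction provides an ordering for which this number is logarithmic in $|J(T)|$.

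\textbf{Structural reduction.} Fix any bottom-up layout $L$ and set $S_i = \{L^{-1}(1),\dots,L^{-1}(i)\}$. Because $L$ is bottom-up, $S_i$ is closed under taking descendants in $T$; let $M(S_i)\subseteq S_i$ denote its maximal nodes, i.e.\ those whose parent is not in $S_i$. I claim $J_L(i)\subseteq\bigcup_{c\in M(S_i)} X_c$. Indeed, if $j\in J_L(i)$ then $j$ belongs to some processed bag while the topmost bag $u^{\star}$ of $T$ containing $j$ lies outside $S_i$ (otherwise the parent of $u^{\star}$, which is the Forget node of $j$, would have been processed and $j$ would already be forgotten). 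By the third tree-decomposition axiom, the bags containing $j$ form a subtree of $T$, so the path from $u^{\star}$ down to a processed occurrence of $j$ crosses the boundary of $S_i$ at a node $c\in M(S_i)$ whose bag contains $j$. Since each bag has at most $tw(G)+1$ vertices, this yields
\begin{equation*}
|J_L(i)| \le (tw(G)+1)\,|M(S_i)|.
\end{equation*}

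\textbf{Balanced bottom-up layout.} Every node of a nice tree decomposition has at most two children (only Join nodes have two), so $T$ is binary. The plan is to build $L$ recursively: at a Leaf, Introduce or Forget node $u$, first lay out the (possibly empty) child subtree, then append $u$; at a Join node $u$ with subtrees $T_1,T_2$, first lay out whichever has the larger stand-alone optimum $g(\cdot)$, then the other, then $u$, where $g(T')$ denotes the minimum over bottom-up layouts of $T'$ of $\max_i|M(S_i)|$. Once the first subtree is finished, its root remains a maximal processed node while the second is being processed, which gives the recurrence
\begin{equation*}
g(T) \le \max\bigl(g(T_1),g(T_2)\bigr)\,+\,\mathbb{1}\bigl[g(T_1)=g(T_2)\bigr].
\end{equation*}

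\textbf{Logarithmic bound and conclusion.} An induction on $k$ shows that any binary tree with $g(T)=k$ has at least $2^k-1$ nodes: forcing $g(T)=k$ requires a Join node whose two children both achieve $g=k-1$, which by induction each have size at least $2^{k-1}-1$. Hence $g(T)\le\lceil\log_2|J(T)|\rceil\le\lceil\log_2(4n)\rceil$, using $|J(T)|\le 4n$. Combining with the structural bound gives $|J_L(i)|\le(tw(G)+1)\lceil\log_2(4n)\rceil$, which matches the statement up to the small constant in front of $tw(G)$. The main obstacle is the Join-node case of the analysis: the ``$+1$'' penalty must only kick in when the two subtrees have equal peak values, because only then does the recursion double the size at each increment and turn into a logarithmic bound rather than a linear one.
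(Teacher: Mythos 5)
Your proof is correct, and it arrives at the same bound the paper's own argument actually establishes: both really give $(tw(G)+1)\lceil\log_2 4n\rceil$, since a bag of a width-$tw(G)$ decomposition has up to $tw(G)+1$ vertices (the paper's proof drops this $+1$ as silently as you flag it). The overall architecture also coincides: both proofs identify $J_L(i)$ with the union of the bags of the maximal processed nodes --- your $M(S_i)$ is exactly the paper's set of ``critical'' nodes $A$ --- and both order the children of a Join node greedily so that this set stays logarithmic. Where you genuinely diverge is in the greedy criterion and the counting lemma behind the logarithm. The paper visits the child subtree with more \emph{nodes} first; every critical node other than the current one is then a left child of a Join node on the root-to-current path, and since each such left subtree is at least as large as the right subtree being processed, the number of remaining nodes halves at each Join along that path, giving at most $\lceil\log_2 4n\rceil$ of them. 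You instead visit the child with the larger value of the recursively defined optimum $g$ --- i.e., you order by Horton--Strahler number --- and bound $g$ by the classical fact that a binary tree of Strahler value $k$ has at least $2^k-1$ nodes. Your route is slightly stronger (the Strahler recursion yields the exact optimum of $\max_i|M(S_i)|$ over bottom-up layouts, which connects to the register-allocation literature the paper only gestures at in its conclusion), at the cost of introducing and analyzing $g$; the paper's size rule is more elementary and needs no recurrence.

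One small repair in your structural reduction: from $u^\star\in S_i$ you cannot infer that the parent of $u^\star$ (the Forget node of $j$) has been processed, because a bottom-up order closes $S_i$ under descendants, not ancestors. The containment $J_L(i)\subseteq\bigcup_{c\in M(S_i)}X_c$ survives regardless: if $u^\star\in S_i$ and its parent lies outside $S_i$ (or does not exist), then $u^\star$ itself belongs to $M(S_i)$ and its bag contains $j$; if the parent is also in $S_i$, then $j$ has already been forgotten, contradicting $j\in J_L(i)$; and if $u^\star\notin S_i$, your boundary-crossing argument applies verbatim.
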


\begin{proof}

To prove that such a layout exists we present an algorithm which, when applied to the root of the nice tree decomposition, computes a bottom-up layout $L$ such that $J_L^{max} \leq  tw(G)\, \lceil \log 4n\rceil $. To ease the understanding of certain parts of the proof, these parts will be illustrated on Figure~\ref{fig:lemma2} where a tree with 174 nodes is depicted.

The algorithm works as follows. We perform a depth-first search starting from the root node, and when we have a Join node we first go to the subtree having the greatest number of nodes. With this depth-first search we get a discovery and finishing times for each node. The labeling is obtained by sorting the nodes in increasing order of their finishing time. As an example, the nodes of the nice tree  decomposition in Figure~\ref{fig_nice_tree_decomposition} 
have been labelled according to this procedure if we consider in this example that $L^{-1}(i)=i$ for $1\leq i\leq 12$.

Now, let us analyze $J_L^{max}$ on the layout returned by our algorithm. Recall that we use the notation $Z_i := X_{L^{-1}(i)}$ and let us define the operator $\sqcup$ such that $Z_i\sqcup Z_{i+1} := Z_i \setminus \{j\}$ if $L^{-1}(i+1)$ is a Forget node, with $j$ the vertex forgotten, and $Z_i\sqcup Z_{i+1} := Z_i \cup Z_{i+1}$ otherwise. Notice that $J_L(i) = \sqcup_{o\leq i} Z_o$ and that if we have a set of consecutive nodes $L^{-1}(l)$, $L^{-1}(l+1)$, \ldots, $L^{-1}(u)$ such that $L^{-1}(i+1)$ is a parent node for $L^{-1}(i)$ $(l\leq i\leq u-1)$, then $\sqcup_{i=l}^u Z_i = Z_u$. Moreover if this chain is maximal, i.e. $L^{-1}(u+1)$ is not a parent node of $L^{-1}(u)$, then it means that the parent of $L^{-1}(u)$ is a Join node. For any node $L^{-1}(i)$, we have $J_L(i) = \sqcup_{o\leq i} Z_o = \cup_{l\in A}  X_l$, with $A$ a set of nodes, of minimum size, that we call {\it critical}. This set of critical nodes $A$ can be obtained by taking the last node in each maximal chain over nodes $L^{-1}(1)$ to $L^{-1}(o)$. Thus, $A$ is composed of the current node $L^{-1}(i)$ along with other nodes whose parents are Join nodes. Such set $A$ is illustrated in Figure~\ref{fig:lemma2}(a) where we consider $i=166$ and where the nodes composing $A$ are green colored.

For a Join node $L^{-1}(i)$ having two childrens $L^{-1}(l)$ and $L^{-1}(r)$, let denote by $T_l(i)$ and $T_r(i)$ the corresponding subtrees. We will assume that
$|T_l(i)|\geq |T_r(i)|$ and therefore during the depth-first search we use, node $L^{-1}(l)$ will be examined before node $L^{-1}(r)$. We say that $L^{-1}(l)$ (resp. $L^{-1}(r)$) is the left (resp. right) children of $L^{-1}(i)$. By the way the depth-first search is performed, all nodes in $A$, excepted the current node $L^{-1}(i)$, are left children of Join nodes, and these Join nodes are on the path $P$ between the root node and the current node $L^{-1}(i)$. In Figure~\ref{fig:lemma2}(a), such path $P$ contains the Join nodes purple colored.

Now, let us bound the number of Join nodes on the path $P$. First, we construct a reduced graph by removing the nodes of $R$ where $R$ is the set of nodes of $P$ that are not Join nodes. Such a reduced graph is illustrated in Figure~\ref{fig:lemma2}(b). By doing this set of deletions, we get a tree with fewer than $4n$ nodes (recall that the nice tree decomposition we started from has at most $4n$ nodes). The number of Join nodes is equal to the length of the reduced path $P\setminus R$ which is $\lceil \log 4n \rceil$. Indeed, starting from the root, each time we go on a node along this path the number of remaining nodes is divided by at least 2.

Thus, we have proved that $|A|\leq \lceil \log 4n\rceil$ for any node $L^{-1}(i)$ labelled with our algorithm. Recall that $J_L(i) = \cup_{l\in A}  X_l$,
and moreover from the definition of tree-width, we have $|X_l| \leq tw(G)$. Thus, we have $|J_L(i)| \leq  tw(G)\, \lceil \log 4n\rceil$ and the proof is complete. {\hfill $\square$ \bigbreak}

\end{proof}

\begin{figure}[ht]
\begin{subfigure}[
]{
\includegraphics[width=.5\linewidth, height=4.5cm]{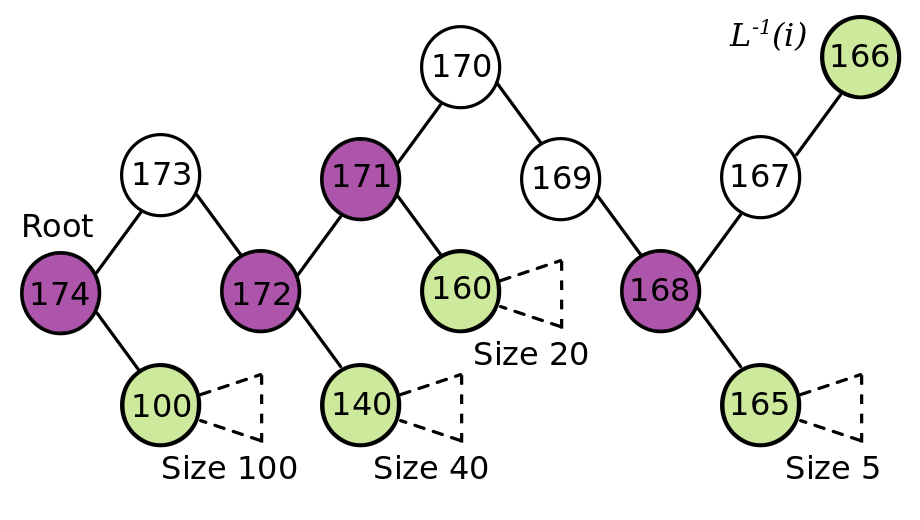}
}
\end{subfigure}
\begin{subfigure}[]{
\includegraphics[width=.35\linewidth, height=4.5cm]{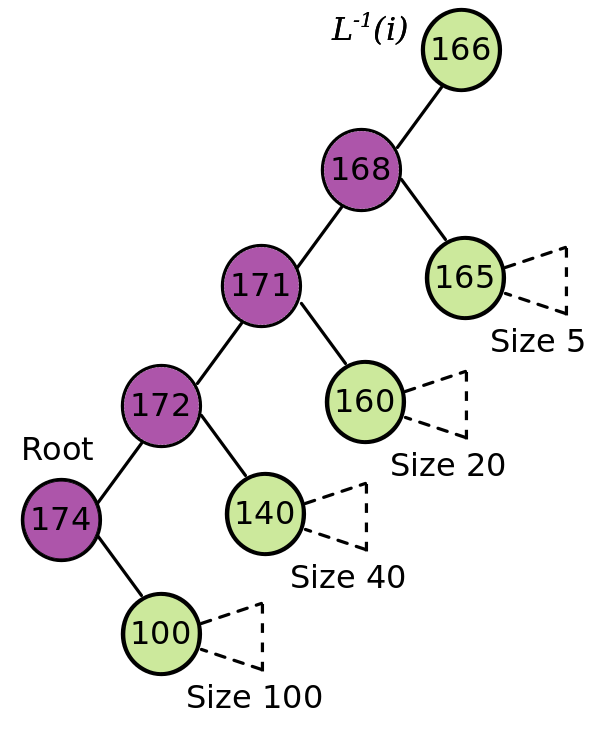}
}
\end{subfigure}
\caption{Illustration of the proof of Lemma~\ref{lemma_size_combinatorial_frontier} on a possible tree with 174 nodes. The tree is labelled with a bottum-up layout $L$, and for notational convenience we consider that $L^{-1}(i)=i$ for $1\leq i\leq 174$. Some subtrees are represented by dashed triangles. On Figure~(a) is depicted the tree. When considering node $166$, the set of critical nodes $A=\{166, 165, 160, 140, 100\}$. All nodes in $A$, excepted the node $166$, are left children of Join nodes, and these Join nodes are on the path $P = \{174,173,172,171,170,169,168,167,166\}$ between the root node 174 and the node $166$. On Figure~(b) is depicted the reduced tree obtained by removing all nodes in $P$ which are not Join nodes, namely $R$.
In each figure, the set of critical nodes $A$ associated to node $166$ is green colored and the Join nodes in $P$ are purple colored.
}
\label{fig:lemma2}
\end{figure}

Using the previous defined layout, we obtain an overall complexity of our dynamic programming algorithm of $O(\co_{sum}^4 \times\we_{sum}^4\times tw(G) \times \log(n) \times n^{2tw(G)+1}\times {16}^{tw(G)})$. The time complexity of this dynamic programming algorithm being pseudo-polynomial (because of $\co_{sum}$ and $\we_{sum}$), we are going to transform it into a bi-FPTAS.

\section{Application of a trimming technique}\label{sec-trimming}
In this Section, we propose a bi-FPTAS derived from the algorithm presented in Section~\ref{sec-dyna}. 
To transform the dynamic programming algorithm, we apply an approach for transforming a dynamic programming formulation into a FPTAS. This approach, called the {\it trimming-the-state-space} technique is due to Ibarra \& Kim~\cite{Ibarra:1975:FAA:321906.321909} and consists in iteratively thin out the state space of the dynamic program by collapsing states that are close to each other.

In the approximation algorithm, we are going to trim the state space by discarding states that are close to each other. While carrying these states deletions, we must ensure that the resulting errors cannot propagate in an uncotrolled way. To this end, we characterize a notion of proximity between states. We define $\Delta := 1 + \varepsilon/8n$, with $\varepsilon>0$ a fixed constant. Let us first consider the first two coordinates of a state $s=[c_1,c_2,c_3,c_4,\CC_i]$. We have $0\leq c_1\leq \co_{sum}$ and $0\leq c_2\leq \co_{sum}$. We divide each of those intervals into intervals of the form $[0]$ and $[\Delta^l,\Delta^{l+1}]$, with $l$ an integer value getting from $0$ to $L_1 := \lceil \log_\Delta (\co_{sum})\rceil =
 \lceil ln (\co_{sum})/ ln (\Delta) \rceil \leq 
\lceil (1+\frac{8n}{\varepsilon}) ln (\co_{sum}) \rceil$. In the same way, we divide the next two coordinates into
intervals of the form $[0]$ and $[\Delta^l,\Delta^{l+1}]$, with $l$ an integer value getting from $0$ to $L_2 := \lceil \log_\Delta (\we_{sum})\rceil$. The union of those intervals defines a set of non-overlapping boxes. If two states have the same combinatorial frontier and have their first four coordinates falling into the same box, then they encode similar solutions and we consider them to be close to each other.


The approximation algorithm proceeds in the same way as the exact algorithm, except that we add a trimming step to thin out each state space $\CS_i$. The trimming step consists in keeping only one solution per box and per combinatorial frontier. Thus, the worst time complexity of this approximation algorithm is $O(L_{1}^4 \times L_{2}^{4} \times tw(G) \times \log(n) \times n^{2tw(G)+1}\times {16}^{tw(G)})$. We therefore get a bi-FPTAS when the tree-width $tw(G)$ is bounded by a constant.



\begin{theorem}\label{th-approx}
There exists a bi-FPTAS for the problem $Pk|G,mem|C_{max}$ when the tree-width of $G$ is bounded by a constant, which returns a solution within a ratio of $(1+\varepsilon)$ for the optimum makespan, where the memory capacity $M_i$, $1 \leq i \leq k$, of each machine may be exceeded by at most a factor $(1+\varepsilon)$.
\end{theorem}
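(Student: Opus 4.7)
The plan is to prove Theorem~\ref{th-approx} in two parts: (i) the complexity bound, which essentially follows from the state-space count already given just before the theorem statement, and (ii) the bi-criteria approximation guarantee, which requires a careful propagation-of-error argument along the $|J(T)| \leq 4n$ phases of the trimmed dynamic program. The error analysis is the real content; the complexity is then a one-line consequence of the fact that $L_1$ and $L_2$ are $O((n/\varepsilon)\log(\co_{sum}))$ and $O((n/\varepsilon)\log(\we_{sum}))$, with $\co_{sum},\we_{sum}$ polynomial in the input size, so the overall running time is polynomial in $n$, $\log(\co_{sum})$, $\log(\we_{sum})$ and $1/\varepsilon$ once $tw(G)$ is a fixed constant.

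For the approximation guarantee, I would formulate the following invariant. Let $\CS_i$ denote the state space of the exact algorithm after phase $i$ and $\widetilde{\CS}_i$ that of the trimmed algorithm. I claim that for every $s=[c_1,c_2,c_3,c_4,\CC_i]\in\CS_i$ there exists $\tilde s=[\tilde c_1,\tilde c_2,\tilde c_3,\tilde c_4,\CC_i]\in\widetilde{\CS}_i$ with the \emph{same} combinatorial frontier satisfying $\tilde c_t \leq \Delta^{i}\, c_t$ for $t=1,2,3,4$. The base case $i=0$ is immediate since $\CS_0=\widetilde{\CS}_0=\{[0,0,0,0,\CC_0]\}$.

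For the inductive step, I would exploit the following structural property of the transitions described in Section~\ref{sec-31}: in Leaf, Introduce, Forget, and Join updates, the increments added to each of the four numerical coordinates depend only on the combinatorial frontier $\CC_{i-1}$ (through $\sigma_{i-1}$, $\sigma_{i-1}'$, and the edges $E_{Z_i}$) and on the fixed quantities $\co_j,\we_j,\we_{j'}$, but \emph{not} on the actual numerical values $c_1,\ldots,c_4$. Consequently, if $\tilde s$ shares the combinatorial frontier of $s$ and satisfies the multiplicative bound $\tilde c_t \leq \Delta^{i-1} c_t$, then applying the same transition rule yields states with identical increments, so the pre-trimming descendant $\tilde s'$ of $\tilde s$ still satisfies $\tilde c_t' \leq \Delta^{i-1} c_t'$. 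The trimming step then replaces $\tilde s'$ by the unique representative of its box, which shares the combinatorial frontier and whose coordinates differ by a factor of at most $\Delta$; composing, we obtain the desired $\Delta^{i}$ bound.

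At termination ($i=|J(T)|\leq 4n$), applying the invariant to the state encoding an optimal feasible schedule yields a retained state with makespan at most $\Delta^{4n}\,C^*$ and memory loads at most $\Delta^{4n}\,M_l$ on each machine. A short calculation gives $\Delta^{4n} = (1+\varepsilon/(8n))^{4n} \leq e^{\varepsilon/2} \leq 1+\varepsilon$ for $\varepsilon\leq 1$ (by convexity of $\exp$; smaller $\varepsilon$ is the only interesting regime, and one may rescale $\varepsilon$ by a constant if one prefers to avoid this restriction). The output of the algorithm is therefore a schedule of makespan at most $(1+\varepsilon)C^*$ whose memory loads exceed the capacities $M_l$ by a factor at most $1+\varepsilon$, as required. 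The main obstacle in writing this out carefully is verifying the claim about transitions: one must check, case by case (especially for the Introduce node with its $F_a,F_a',F_a''$ analysis), that the numerical increments genuinely depend only on the combinatorial frontier and not on the already-accumulated coordinate values, so that the rounding error from the previous phase passes through the transition unaltered in its multiplicative form.
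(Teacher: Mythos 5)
Your proposal is correct and follows essentially the same route as the paper: the invariant you state is exactly the paper's Lemma~\ref{lemma_delta_power} (existence, for each exact state, of a trimmed state with the same combinatorial frontier and coordinates inflated by at most $\Delta^i$), proved by the same induction in which the transition increments depend only on the frontier and the trimming contributes one extra factor $\Delta$ per phase, concluding with $\Delta^{4n}\leq 1+\varepsilon$. The only differences are cosmetic (you make the frontier-only dependence of the increments explicit as the point to verify, and you restrict to $\varepsilon\leq 1$ where the paper allows $\varepsilon\leq 2$).
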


For sake of readability, the proof is presented when $k=2$. In the conclusion, we mention the general case when $k$ is any fixed constant. We denote by $\CU_i$ (resp. $\CT_i$) the state space obtained before (resp. after) performing the trimming step at the $i$-th phase of the algorithm. The proof of this theorem relies on the following lemma.

\begin{lemma}
For each state $s=[c_1,c_2,c_3,c_4, \CC_i] \in \CS_i$, there exists a state $[c_1^\#,c_2^\#,c_3^\#,c_4^\#,\CC_i] \in \CT_i$ such that
\begin{equation}
c_1^\# \leq \Delta^i c_1\ \text{ and } \ c_2^\# \leq \Delta^i c_2 \ \text{ and } \ c_3^\# \leq \Delta^i c_3 \ \text{ and } \ c_4^\# \leq \Delta^i c_4.
\label{ineq_statement}
\end{equation}
\label{lemma_delta_power}
\end{lemma}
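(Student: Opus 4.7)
The plan is to induct on the phase index $i$. The base case $i=0$ is immediate since $\CS_0 = \CU_0 = \CT_0 = \{[0,0,0,0,\CC_0]\}$ and $\Delta^0 = 1$, so the unique state serves as its own $\#$-companion. For the inductive step I would fix an arbitrary $s = [c_1, c_2, c_3, c_4, \CC_i] \in \CS_i$ and trace it back to a parent state $s' = [c_1', c_2', c_3', c_4', \CC_{i-1}] \in \CS_{i-1}$ from which it is produced by the transition at node $L^{-1}(i)$ described in Section~\ref{sec-31}. The induction hypothesis then supplies a trimmed state $s'^\# = [c_1'^\#, c_2'^\#, c_3'^\#, c_4'^\#, \CC_{i-1}] \in \CT_{i-1}$ sharing the same frontier $\CC_{i-1}$ and satisfying $c_a'^\# \leq \Delta^{i-1} c_a'$ for $a=1,2,3,4$.

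The key invariance I would extract is that each transition rule adds to $(c_1', c_2', c_3', c_4')$ an increment $\delta_a \in \{0, \co_j, \we_j, \alpha_i^a, \beta_i^a\}$ that depends only on the type of the node $L^{-1}(i)$, the fixed edge set $E_{Z_i}$, and the combinatorial frontier $\CC_{i-1}$, and in particular is independent of the numerical values $c_a'$. Applying the same transition to $s'^\#$ therefore produces a state $\tilde s = [\tilde c_1, \tilde c_2, \tilde c_3, \tilde c_4, \CC_i] \in \CU_i$ carrying exactly the same new frontier $\CC_i$ as $s$, and with the same increments $\delta_a$ on its first four coordinates. Since $\Delta \geq 1$, this yields coordinate-wise
\[
\tilde c_a \;=\; c_a'^\# + \delta_a \;\leq\; \Delta^{i-1} c_a' + \delta_a \;\leq\; \Delta^{i-1}(c_a' + \delta_a) \;=\; \Delta^{i-1} c_a.
\]

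To close the induction I would then invoke the trimming step applied at the end of phase $i$, which retains one representative per (box, combinatorial frontier) pair. Either $\tilde s$ itself survives into $\CT_i$, or it is collapsed to some $s^\# \in \CT_i$ with the same frontier $\CC_i$ whose first four coordinates lie in the same boxes of the form $[\Delta^l, \Delta^{l+1}]$ as those of $\tilde s$; in either case $c_a^\# \leq \Delta \cdot \tilde c_a \leq \Delta \cdot \Delta^{i-1} c_a = \Delta^i c_a$, which is exactly~\eqref{ineq_statement}. The main obstacle is the case analysis needed to verify the ``frontier-only dependence'' of the increments: the Forget and Join cases are essentially free because they add $\delta_a = 0$ and only edit $\CC_i$, the Leaf case is straightforward, but the Introduce case requires checking that the sets $F_a$ and $F_a'$ from~\eqref{eq_f}--\eqref{eq_f_prime} are determined by $\sigma_{i-1}, \sigma_{i-1}'$ (which live inside $\CC_{i-1}$) together with $E_{Z_i}$, so that $\alpha_i^a$ and $\beta_i^a$ are indeed identical when computed from $s'$ or from $s'^\#$. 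Once this bookkeeping is settled the inductive argument goes through verbatim.
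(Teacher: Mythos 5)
Your proposal is correct and follows essentially the same route as the paper's proof: induction on the phase, tracing $s$ back to a parent state, applying the induction hypothesis to get a trimmed companion with the same frontier, re-applying the transition, and using the box structure to absorb one extra factor of $\Delta$. The only difference is presentational — you package all node types uniformly via a frontier-determined increment $\delta_a$, whereas the paper spells out only the Introduce case and leaves the others as analogous; your explicit observation that $\alpha_i^a,\beta_i^a$ depend only on $\CC_{i-1}$ and $E_{Z_i}$ is exactly the (implicit) ingredient the paper relies on.
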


\begin{proof}

The proof of this lemma is by induction on $i$. The first node we consider is a Leaf of the nice tree decomposition and we have $\CT_1 = \CS_1$. Therefore, the statement is correct for $i=1$. Now, let us suppose that inequality~(\ref{ineq_statement}) is correct for any index $i-1$ and consider an arbitrary state $s = [c_1, c_2, c_3, c_4, \CC_i] \in \CS_i$. Due to a lack of space, proof of the validity of the Lemma when passing from phase $i-1$ to $i$ is only presented for a node of type Introduce. Note that the proof for other types of nodes can be derived from that of an Introduce node.
Let $L^{-1}(i)$ be an Introduce node with $j \in J(G)$ being the vertex introduced. We must distinguish between cases where $j$ belongs to $J_L(i-1)$ and where he does not.

First, let us assume that $j \in J_L(i-1)$. Then $s$ was obtained from a state $[w,x,y,z,\CC_{i-1}] \in \CS_{i-1}$ and $s=[w, x, y + \delta_{a,1}\alpha_i^1 + \delta_{a,2} \beta_i^1, z + \delta_{a,2}\alpha_i^2 + \delta_{a,1} \beta_i^2, \CC_i^a]$ with $a=\sigma_i(j)$. According to the induction hypothesis, there is a state $[w^\#, x^\#, y^\#, z^\#, \CC_{i-1}] \in \CT_{i-1}$ such that 

\begin{equation}
\label{eq_hypothese_req_1}
w^{\#} \leq \Delta^{i-1} w\ \text{,   } \ x^{\#} \leq \Delta^{i-1} x\ \text{,   } \ y^{\#} \leq \Delta^{i-1} y\ \text{,   } \ z^{\#} \leq \Delta^{i-1} z.
\end{equation}

The trimmed algorithm generates the state $[w^\#, x^\#, y^\# + \delta_{a,1}\alpha_i^1 + \delta_{a,2} \beta_i^1, z^\# + \delta_{a,2}\alpha_i^2 + \delta_{a,1} \beta_i^2, \CC_i^a] \in {\cal U}_i$ and may remove it during the trimming phase, but it must leave some state $t=[c_1^\#, c_2^\#, c_3^\#, c_4^\#, \CC_{i}^a] \in {\cal T}_i$ that is in the same box as $[w^\#, x^\#, y^\# + \delta_{a,1}\alpha_i^1 + \delta_{a,2} \beta_i^1 , z^\# + \delta_{a,2}\alpha_i^2 + \delta_{a,1} \beta_i^2, \CC_i^a] \in~{\cal U}_i$. This state $t$ is an approximation of $s$ in the sense of~(\ref{eq_hypothese_req_1}). 

Indeed, its first coordinate $c_1^\#$ satisfies 
\begin{equation}
c_1^\# \leq \Delta(w^\#) \leq \Delta(\Delta^{i-1}w) \leq \Delta^i w  = \Delta^i c_1,
\end{equation}
its third coordinate $c_3^\#$ satisfies 
\begin{equation}
\begin{split}
c_3^\# & \leq
\Delta(y^\# + \delta_{a,1} \alpha_i^{1} + \delta_{a,2} \beta_i^{1})   \leq \Delta(\Delta^{i-1}y + \delta_{a,1} \alpha_i^{1} + \delta_{a,2} \beta_i^{1}) \\
& \leq \Delta^{i} y + \Delta (\delta_{a,1} \alpha_i^{1} + \delta_{a,2} \beta_i^{1}) \leq \Delta^i  c_3
\end{split}
\end{equation}
and its last coordinate is the same as $s$. By similar arguments, we can show that $c_2^\# \leq \Delta^i c_2$ and $c_4^\# \leq \Delta^i c_4$.

Now, let us assume that $j \notin J_L(i-1)$.
In that case, the state $s$ was obtained from a state $[w, x, y, z, \CC_{i-1}] \in \CS_{i-1}$ and either 
$s = [w+\co_j, x, y + \we_j + \alpha_i^1, z + \beta_i^2,\CC_i^1]$ 
or 
$s = [w, x+\co_j, y + \beta_i^1, z + \we_j + \alpha_i^2, \CC_i^2]$.
We assume that $s = [w+\co_j, x, y + \we_j + \alpha_i^1, z + \beta_i^2,\CC_i^1]$ as, with similar arguments, the rest of the proof is also valid for the other case. By the inductive assumption, there exists a state $[w^\#, x^\#, y^\#, z^\#, \CC_{i-1}] \in \CT_{i-1}$ that respects (\ref{eq_hypothese_req_1}). The trimmed algorithm generates the state $[w^\# + \co_j, x^\#, y^\# + \we_j + \alpha_i^1, z + \beta_i^2,  \CC_i^1] \in {\cal U}_i$ and may remove it during the trimming phase. However, it must leave some state $t=[c_1^\#, c_2^\#, c_3^\#, c_4^\#, \CC_{i}^1] \in {\cal T}_i$ that is in the same box as$[w^\# + \co_j, x^\#, y^\# + \we_j + \alpha_i^1, z + \beta_i^2, \CC_i^1] \in {\cal U}_i$. This state $t$ is an approximation of $s$ in the sense of (\ref{eq_hypothese_req_1}). Indeed, its last coordinate $\CC_i^1$ is equal to $\CC_i$ and, by arguments similar to those presented for $j \in J_L(i-1)$, we can show that $c_o^\# \leq \Delta^i c_o$, for $o \in \llbracket 1,4\rrbracket$. Thus, our assumption is valid during the transition from phase $i-1$ to $i$ when $i$ is an Introduce node.

Since the proof for the other type of nodes can be derived from the proof of an Introduce node, the inductive proof is completed.

{\hfill $\square$ \bigbreak}
\end{proof}

Now, let us go back to the proof of Theorem~\ref{th-approx}. After at most $4n$ phases, the untrimmed algorithm outputs the state $s=[c_1,c_2,c_3,c_4,\CC]$ that minimizes the value $\max\{c_1,c_2\}$ such that $c_3 \leq M_1$ and $c_4 \leq M_2$. By Lemma~\ref{lemma_delta_power}, there exists a state $[c_1^\#,c_2^\#,c_3^\#,c_4^\#,\CC] \in {\cal T}_n$ whose coordinates are at most a factor of $\Delta^{4n}$ above the corresponding coordinates of $s$. Thus, we conclude that our trimmed algorithm returns a solution where the makespan is at most $\Delta^{4n}$ times the optimal solution and the amount of memory for each machine is at most $\Delta^{4n}$ its capacity. Moreover, since $\Delta := 1 + \varepsilon/8n$, we have $\Delta^{4n} \leq 1 + \varepsilon$ for $\varepsilon\leq 2$.

So we have presented an algorithm that returns a solution such that the makespan is at most $(1+\varepsilon)$ times the optimal solution and the amount of memory for each machine is at most $(1+\varepsilon)$ its capacity. It ends the proof of Theorem~\ref{th-approx}.

Notice that if we no more consider the maximum memory load as a constraint but as an objective,
our algorithm can be modified to get an $(1+\epsilon)$ approximate Pareto set \cite{approxPareto}. 
 At the end of the algorithm, we consider all the states in $\CS_{|J(T)|}$ and compute for each state $s=[c_1,c_2,c_3,c_4,\CC_{|J(T)|}] \in \CS_{|J(T)|}$ the vector $[\max\{c_1,c_2\},\max\{c_3,c_4\}]$. Then among this set of vectors, we compute the set of nondominated vectors. The problem of finding a set of nondominated vectors among a given set of vectors has been first introduced in \cite{maximavector} and there exists a lots of efficient algorithms for this problem (see for example \cite{maxima-recent}). It is easy to see that the obtained set is a $(1+\varepsilon)$-approximate Pareto set for our problem.

\section{Conclusion} \label{sec-conclu}
Given $2$ machines and a neighborhood graph of jobs with bounded tree-width, we have presented an algorithm that returns a solution, where the capacity of the machines may be exceeded by a factor at most $1+\epsilon$, if at least one solution exists for the scheduling problem under memory constraints. This algorithm consists of three steps: construct a nice tree decomposition of the neighborhood graph; compute a specific bottom-up layout $L$ of the nice tree decomposition; and use a transformed dynamic programming algorithm traversing the nice tree decomposition following $L$. The specific bottom-up layout $L$ is designed to bound the complexity of our algorithm but it is not optimal. It would be interesting to lower this complexity by taking into account the number of vertices associated to each node (see for example~\cite{LAM201163}) and avoiding counting duplicate vertices. However, using layout $L$, the output of our algorithm is generated in polynomial time and is such that the makespan is at most $(1+\varepsilon)$ times the optimal solution and the amount of memory for each machine is at most $(1+\varepsilon)$ its capacity. 

Although the algorithm is presented for 2 machines, it can be extended to any number of machines as adding machines means increasing the number of dimensions of a state. It would require to modify the combinatorial frontier such that $\sigma_i'(j)$ would be the machines on which $j$ has not been assigned and which have memorized the data of $j$. This would change the time complexity to 
$O(n \times L_1^{2k} \times L_2^{2k} \times k \times tw(G) \times log(n) \times n^{2(k tw(G) + 1)} \times 16^{k tw(G)} \times k^{4 tw(G)})$ where $n$ is the number of phases; $L_1^{k} \times L_2^{k}$ is the number of of boxes, $L_1^{k} \times L_2^{k} \times k \times tw(G) \times log(n)$ is the the processing time to compute if the states are the same; and $n^{2(k tw(G) + 1)} \times 16^{k tw(G)} \times k^{4 tw(G)}$ is the number of distincts combinatorial frontiers.



Now that we have provided a bi-FPTAS for graphs of bounded tree-width, it would be interesting to look at graphs bounded by more generic graph parameters like the clique-width and local tree-width. The latter is all the more interesting as we know that planar graphs have locally bounded tree-width and can be used to model numerical simulations on HPC architectures.
\bibliographystyle{splncs03}

%
%
%
%
\end{document}